\newtheorem{Definition}{Definition}[section]
\newtheorem{Theorem}[Definition]{Theorem}
\newtheorem{Lemma}[Definition]{Lemma}
\newtheorem{Corollary}[Definition]{Corollary}
\newtheorem{conjecture}[Definition]{Conjecture}
\newcolumntype{R}[1]{>{\raggedleft\arraybackslash }b{#1}}
\newcolumntype{L}[1]{>{\raggedright\arraybackslash }b{#1}}
\newcolumntype{C}[1]{>{\centering\arraybackslash }b{#1}}
\newcounter{minutes}\setcounter{minutes}{\time}
\newcounter{hours}\setcounter{hours}{\time}
\definecolor{codeblue}{rgb}{0.25, 0.5, 0.75}
\definecolor{codegreen}{rgb}{0, 0.6, 0}
\definecolor{codegray}{rgb}{0.5, 0.5, 0.5}
\definecolor{codepurple}{rgb}{0.58, 0, 0.82}
\definecolor{backcolour}{rgb}{0.95, 0.95, 0.92}
\def\ps@pprintTitle{%
 \let\@oddhead\@empty
 \let\@evenhead\@empty
 \let\@oddfoot\@empty
 \let\@evenfoot\@oddfoot
}
\title{Linear Codes Derived from the Structure of Unit Graphs Over $\mathbb{Z}_n$}
\author{Apurba Sarkar$^1$,  Kalyan Hansda$^2$ and Makhan Maji$^3$ \\
\footnotesize{$^1,^2$Department of Mathematics, Visva-Bharati,}\\
\footnotesize{Santiniketan, Bolpur - 731235, West Bengal, India}\\
\footnotesize{$^3$Department of Mathematics, IIT Madras}\\
\footnotesize{Sardar Patel Road, Chennai, TN - 600036, India}\\
\footnotesize{apurbasarkar065@gmail.com$^1$}, \footnotesize{kalyanh4@gmail.com$^2$} and \footnotesize{makhan2maths@gmail.com$^3$}}
\begin{document}
\maketitle

\begin{abstract}
 In this paper, we study the unit graph $ G(\mathbb{Z}_n) $, where $ n $ is of the form  $n = p_1^{n_1} p_2^{n_2} \dots p_r^{n_r}$, with $ p_1, p_2, \dots, p_r $ being distinct prime numbers and $ n_1, n_2, \dots, n_r $ being positive integers. We establish the connectivity of $ G(\mathbb{Z}_n) $, show that its diameter is at most three, and analyze its edge connectivity. Furthermore, we construct $ q $-ary linear codes from the incidence matrix of $ G(\mathbb{Z}_n) $, explicitly determining their parameters and duals. A primary contribution of this work is the resolution of two conjectures from \cite{Jain2023} concerning the structural and coding-theoretic properties of $ G(\mathbb{Z}_n) $. These results extend the study of algebraic graph structures and highlight the interplay between number theory, graph theory, and coding theory.
\end{abstract}

\begin{keyword}
Direct sum of rings \sep Unit graph over a ring \sep Linear code \sep Incidence matrix
\end{keyword}

\section{Introduction and Preliminaries}

Graph theory and algebraic structures share a profound interplay that has led to numerous advancements in both pure and applied mathematics. One of the intriguing graph-theoretic representations of algebraic structures is the \emph{unit graph} of a commutative ring, which encodes fundamental ring-theoretic properties into a graph-theoretic framework. The study of unit graphs provides valuable insights into algebraic graph theory, with applications in network security, cryptography, and coding theory.

Let $ R $ be a commutative ring with unity. The \emph{unit graph} $ G(R) $, introduced in \cite{Annamalai2021}, is a simple graph whose vertex set consists of the elements of $ R $, and two distinct vertices $ a $ and $ b $ are adjacent if their sum is a unit, i.e., $ a + b \in U(R) $, where $ U(R) $ denotes the set of units in $ R $. The set of non-units is denoted by $ N_U(R) $. This graphical representation captures algebraic properties of rings and has been widely studied in recent years.

One of the fundamental parameters of a graph is its \emph{diameter}, which measures the maximum shortest-path distance between any two vertices. For a given graph $ G $, the diameter is defined as  $\text{diam}(G) = \max\{d(a, b) \mid a, b \in V(G)\}$, where $ d(a, b) $ denotes the shortest path between $ a $ and $ b $ in $ G $ \cite{Clark1991}. A graph's girth, represented by $g_r(G)$, is the length of its shortest cycle. Another important measure is the \emph{edge connectivity} $ \lambda(G) $, which is the minimum number of edges that must be removed to disconnect $ G $ or reduce it to a trivial structure \cite{Clark1991}. Chartrand \cite{Chartrand1966} established that if a connected graph satisfies $ \text{diam}(G) \leq 2 $, then $ \lambda(G) = \delta(G) $, where $ \delta(G) $ is the minimum degree of $ G $. Similarly, Plesník and Znám \cite{Plesnik1989} proved that for a connected bipartite graph with $ \text{diam}(G) \leq 3 $, the same result holds. These properties are crucial in studying the structure and connectivity of unit graphs.

Beyond its graph-theoretic significance, the unit graph has found applications in the construction of \emph{linear codes}, which are fundamental in error correction, cryptography, and secure communications. A linear code $ C $ over a finite field $ \mathbb{F}_q $ of length $ n $ is a subspace of $ \mathbb{F}_q^n $ \cite{LingXing2004}. The key parameters of $ C $ include its dimension $ k = \dim(C) $ and minimum Hamming distance $ d(C) $, which determine its error-detection and error-correction capabilities. The \emph{dual code} $ C^\perp $ consists of all vectors in $ \mathbb{F}_q^n $ that are orthogonal to every codeword in $ C $, with dimension satisfying $ \dim(C^\perp) = n - \dim(C) $. The minimum Hamming distance is defined as  $d(C) = \min\{d_H(a, b) \mid a, b \in C, a \neq b\}$, where $ d_H(a, b) $ is the number of positions at which $ a $ and $ b $ differ. A linear code with length $ n $, dimension $ k $, and minimum distance $ d $ is denoted by $ [n, k, d] $. The generator matrix of $ C $ is a matrix whose rows form a basis for $ C $, and the generator matrix of dual code $C^{\perp}$ is known as the parity-check matrix \cite{LingXing2004}.

The connection between unit graphs and coding theory has led to significant research advancements. Annamalai and Durairajan \cite{Annamalai2021} constructed linear codes from the incidence matrices of unit graphs $ G(\mathbb{Z}_p) $ and $ G(\mathbb{Z}_{2p}) $, where $ p $ is an odd prime. Jain, Reddy, and Shaikh \cite{Jain2023, Jain2024} extended this study by constructing $ q $-ary linear codes from incidence matrices of $ G(\mathbb{Z}_n) $ when $ n $ is a product of powers of three distinct primes, determining their parameters, and exploring decoding techniques. More recently, Jain, Reddy and Shaikh \cite{Jain2024} generalized these results to the $ r $ ary codes derived from $ G(\mathbb{Z}_n \oplus \mathbb{Z}_m) $. These studies led to two conjectures in \cite{Jain2023} regarding the structure of unit graphs and the characterization of their associated linear codes.

Motivated by these developments, we extend this line of research by analyzing the construction of linear codes from unit graphs associated with
$G(\mathbb{Z}_{n})$ for any positive integer $n$. We establish fundamental graph-theoretic properties of these unit graphs, proving that they are connected and have a diameter of at most three. Using their incidence matrices, we construct $ q $-ary linear codes, explicitly determining their parameters and investigating their duals. Furthermore, we resolve two conjectures from \cite{Jain2023} concerning the connectivity and coding-theoretic properties of unit graphs.

The results in this paper contribute to a deeper understanding of algebraic graph structures and their applications in coding theory and discrete mathematics. To support our analysis, we begin with the following useful lemma.

\begin{Lemma} \label{unit+non-unit}
In the ring $ \mathbb{Z}_{p^n} $, the sum of a unit and a non-unit is always a unit.
\end{Lemma}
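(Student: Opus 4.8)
The plan is to use the characterization of units in $\mathbb{Z}_{p^n}$: an element $x \in \mathbb{Z}_{p^n}$ is a unit if and only if $\gcd(x, p^n) = 1$, which happens exactly when $p \nmid x$. Equivalently, the non-units are precisely the multiples of $p$, forming the unique maximal ideal $(p)$. So the statement reduces to showing that if $u$ is a unit (i.e.\ $p \nmid u$) and $v$ is a non-unit (i.e.\ $p \mid v$), then $u + v$ is again a unit (i.e.\ $p \nmid (u+v)$).

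First I would write $v = p k$ for some integer $k$, reflecting that $v$ lies in the maximal ideal $(p)$. Then I would examine $u + v \pmod p$: since $p \mid v$, we have $u + v \equiv u \pmod p$, and because $u$ is a unit we know $u \not\equiv 0 \pmod p$. Hence $u + v \not\equiv 0 \pmod p$, so $p \nmid (u+v)$, which gives $\gcd(u+v, p^n) = 1$ and therefore $u + v \in U(\mathbb{Z}_{p^n})$. This is the whole argument; the computation is a one-line congruence.

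An alternative, more algebraic phrasing I might prefer for cleanliness is to invoke the fact that $\mathbb{Z}_{p^n}$ is a local ring whose unique maximal ideal $\mathfrak{m} = (p)$ consists exactly of the non-units. In any local ring, the sum of a unit and an element of the maximal ideal is a unit: if $u \in U(R)$ and $m \in \mathfrak{m}$, then $u + m \in \mathfrak{m}$ would force $u = (u+m) - m \in \mathfrak{m}$, contradicting that $u$ is a unit. This framing makes the result immediate and emphasizes the structural reason behind it.

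Honestly there is no real obstacle here; the only thing to be careful about is stating the unit characterization correctly (units are exactly the elements coprime to $p$, equivalently not divisible by $p$) and making sure the congruence reduction $u+v \equiv u \pmod p$ is justified by $p \mid v$. I would keep the proof to the short congruence argument, since it is self-contained and does not require citing the local-ring machinery.
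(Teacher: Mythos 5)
Your proof is correct. The paper actually states this lemma without any proof at all (it is treated as a standard fact), so your short congruence argument --- writing the non-unit as $v = pk$, noting $u + v \equiv u \not\equiv 0 \pmod{p}$, and concluding $\gcd(u+v, p^n) = 1$ --- supplies exactly the justification the paper leaves implicit; the local-ring phrasing you mention is an equally valid structural alternative, and indeed it is the viewpoint that makes the lemma's role in the paper (handling sums componentwise in $\mathbb{Z}_{p_1^{n_1}} \oplus \cdots \oplus \mathbb{Z}_{p_r^{n_r}}$) most transparent.
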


The following result is due to T. A. de Lima, A. L. Galdino, A. B. Avelar \& M. A. Rincón \cite{Lima2021} and Proposition 1.10 of \cite{AtiyahMacdonald1994}.

\begin{Theorem}[Chinese Remainder Theorem for $n$-ideals]\label{CRT}
Let $ R $ be a ring with unity, and let $ A_1, A_2, \dots, A_k $ be ideals of $ R $. Then the following statements hold:
\begin{itemize}
    \item [i.] The map
    $$\phi: R \to R/A_1 \times R/A_2 \times \dots \times R/A_k$$
    defined by $ \phi(r) = (r + A_1, r + A_2, \dots, r + A_k) $ is a ring homomorphism with kernel
    $$\ker(\phi) = A_1 \cap A_2 \cap \dots \cap A_k.$$

    \item [ii.] If the ideals satisfy the comaximality condition
    $$A_i + A_j = R, \quad \text{for all } i, j \in \{1, 2, \dots, k\}, \; i \neq j,$$
    then  $R / (A_1 \cap A_2 \cap \dots \cap A_k) \cong R/A_1 \times R/A_2 \times \dots \times R/A_k$. In other words,  $R / (A_1 \cap A_2 \cap \dots \cap A_k) \cong R/A_1 \oplus R/A_2 \oplus  \dots\oplus  R/A_k$.
\end{itemize}
\end{Theorem}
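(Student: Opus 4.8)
The plan is to prove the two parts in sequence, treating part (i) as essentially formal and reserving the real work for the surjectivity argument underlying part (ii). For part (i), I would first verify directly that $\phi$ respects the ring operations: each coordinate of $\phi$ is the canonical quotient map $r \mapsto r + A_i$, which is itself a ring homomorphism, so the diagonal map $\phi$ into the product ring is automatically a homomorphism, sending $1$ to the identity $(1+A_1, \dots, 1+A_k)$. For the kernel, I would observe that $\phi(r)$ equals the zero element of the product precisely when $r + A_i = A_i$ for every $i$, that is, when $r \in A_i$ for all $i$; hence $\ker(\phi) = A_1 \cap A_2 \cap \dots \cap A_k$, which is exactly the claimed identity.

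For part (ii), the strategy is to combine part (i) with the First Isomorphism Theorem. Once $\phi$ is known to be surjective, the induced map $R/\ker(\phi) \to R/A_1 \times \dots \times R/A_k$ is an isomorphism, and since $\ker(\phi) = A_1 \cap \dots \cap A_k$ by part (i), this yields $R/(A_1 \cap \dots \cap A_k) \cong R/A_1 \times \dots \times R/A_k$ directly. Note that nothing forces me to establish the auxiliary identity $\bigcap_i A_i = \prod_i A_i$; the stated conclusion follows purely from surjectivity together with the kernel computation. Thus the entire content of part (ii) reduces to proving that $\phi$ is surjective under pairwise comaximality.

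To establish surjectivity, the key is to construct, for each index $i$, an element $e_i \in R$ acting as a coordinate indicator, namely $e_i \equiv 1 \pmod{A_i}$ and $e_i \equiv 0 \pmod{A_j}$ for all $j \neq i$. I would build $e_i$ from the comaximality relations: for each $j \neq i$, use $A_i + A_j = R$ to write $1 = a_{ij} + b_{ij}$ with $a_{ij} \in A_i$ and $b_{ij} \in A_j$, and set $e_i = \prod_{j \neq i} b_{ij}$. Since each factor lies in the corresponding $A_j$ and the $A_j$ are ideals, the whole product lies in $A_j$, so $e_i \equiv 0 \pmod{A_j}$ for every $j \neq i$; reducing modulo $A_i$ and using $b_{ij} = 1 - a_{ij} \equiv 1 \pmod{A_i}$ gives $e_i \equiv \prod_{j \neq i} 1 = 1 \pmod{A_i}$. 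Then, given an arbitrary target $(r_1 + A_1, \dots, r_k + A_k)$, the element $r = \sum_{i} r_i e_i$ satisfies $r \equiv r_i \pmod{A_i}$ for each $i$, because every term with index $\neq i$ vanishes modulo $A_i$; hence $\phi(r) = (r_1 + A_1, \dots, r_k + A_k)$, proving surjectivity.

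The main obstacle is precisely this construction of the indicator elements $e_i$ and the simultaneous verification that the finite product $\prod_{j \neq i} b_{ij}$ vanishes modulo every $A_j$ with $j \neq i$ while reducing to $1$ modulo $A_i$; passing from the pairwise hypothesis $A_i + A_j = R$ to a single element that is correct at all coordinates at once is the crux of the argument. Once the $e_i$ are in hand, assembling the preimage $r = \sum_i r_i e_i$ and invoking the First Isomorphism Theorem are routine.
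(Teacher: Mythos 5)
Your proof is correct. Note, however, that the paper does not prove this theorem at all: it is quoted as a known result, attributed to de Lima et al.\ and to Proposition 1.10 of Atiyah--Macdonald, so there is no internal proof to compare against. Your argument --- the kernel computation for part (i), and for part (ii) the construction of indicator elements $e_i=\prod_{j\neq i}b_{ij}$ with $e_i\equiv 1 \pmod{A_i}$ and $e_i\equiv 0\pmod{A_j}$, followed by the First Isomorphism Theorem --- is essentially the same proof given in the cited source, and it has the additional merit of working verbatim for two-sided ideals in a possibly noncommutative ring with unity, which matches the generality of the statement as the paper phrases it. Your observation that the identity $\bigcap_i A_i=\prod_i A_i$ is not needed for the stated conclusion is also correct; that identity is part of the fuller classical statement but is not claimed here.
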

\section{Main Results}
In this section, we establish key properties of unit graphs and construct linear codes from their incidence matrices. We begin by stating two conjectures concerning the structure and coding-theoretic properties of unit graphs.
\begin{conjecture}\cite{Jain2023}
Let $G(\mathbb{Z}_n)$ be a unit graph where n is any natural number. Then $G(\mathbb{Z}_n)$ is a connected graph, and
\begin{enumerate}
    \item If $2\in U(\mathbb{Z}_n)$, then $\text{diam}(G(\mathbb{Z}_n))\leq 2$.
    \item If $2\in N_U(\mathbb{Z}_n)$, then $\text{diam}(G(\mathbb{Z}_n))\leq 3$.
\end{enumerate}
\end{conjecture}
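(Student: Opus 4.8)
The plan is to prove connectivity and the two diameter bounds by reducing everything to the structure of $\mathbb{Z}_n$ as a direct sum of local rings. By the Chinese Remainder Theorem (Theorem~\ref{CRT}), writing $n = p_1^{n_1}\cdots p_r^{n_r}$ gives $\mathbb{Z}_n \cong \mathbb{Z}_{p_1^{n_1}} \oplus \cdots \oplus \mathbb{Z}_{p_r^{n_r}}$, and an element is a unit if and only if each of its coordinates is a unit in the corresponding local ring $\mathbb{Z}_{p_i^{n_i}}$. This coordinatewise characterization of units is the engine of the whole argument, so I would first record it as a preliminary observation together with Lemma~\ref{unit+non-unit}, which says that in each local factor a unit plus a non-unit is a unit.

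\medskip

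The key dichotomy is whether $2 \in U(\mathbb{Z}_n)$, which happens exactly when $2$ is a unit in every factor, i.e.\ when $n$ is odd (no factor has $p_i = 2$). I would split into the two cases accordingly. For Case~1 ($2 \in U(\mathbb{Z}_n)$, $n$ odd), to show $\mathrm{diam} \le 2$ I would take any two distinct non-adjacent vertices $a, b$ and produce a common neighbor $c$, i.e.\ an element with $a + c$ and $b + c$ both units. The natural candidate is to solve $a + c \in U$ and $b + c \in U$ coordinatewise; since $2$ is a unit, the element $c = -\tfrac{a+b}{2}$ (so that $a+c$ and $b+c$ are negatives differing by the unit-related shift) or, more robustly, a coordinatewise choice of $c_i$ avoiding the two ``forbidden'' values that would make either sum a non-unit, works because each $\mathbb{Z}_{p_i^{n_i}}$ has more units than the two values we must dodge (using $p_i \ge 3$). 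For Case~2 ($2 \in N_U(\mathbb{Z}_n)$, so $p_1 = 2$), the graph is bipartite with parts determined by the parity in the first coordinate, adjacency forces opposite parity, and I would build a path of length at most $3$ between any two vertices by a similar coordinatewise interpolation, inserting an intermediate vertex to correct parity when $a$ and $b$ lie in the same part. Connectivity follows immediately once the diameter bounds are established.

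\medskip

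I expect the main obstacle to be the coordinatewise existence argument: one must verify that in each local factor $\mathbb{Z}_{p_i^{n_i}}$ there is always a choice of the connecting coordinate that simultaneously makes the required sums units, and that these local choices can be assembled via CRT into a single global vertex distinct from $a$ and $b$. The tight spot is the smallest factors, where the number of units is limited---for instance $\mathbb{Z}_3$ has only two units, and the factor $\mathbb{Z}_2$ in the even case forces the parity/bipartite structure that is precisely why the diameter can rise from $2$ to $3$. I would handle this by a careful counting argument showing that the set of ``bad'' coordinate values we must avoid is strictly smaller than the ambient size of each factor, with the even prime treated separately to exploit the bipartite structure and justify the need for a third edge.
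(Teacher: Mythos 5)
Your proposal is correct and follows essentially the same route as the paper: reduce via the CRT to $\mathbb{Z}_{p_1^{n_1}} \oplus \cdots \oplus \mathbb{Z}_{p_r^{n_r}}$, construct common neighbors (resp.\ length-$3$ paths) coordinatewise using Lemma~\ref{unit+non-unit} and the fact that each odd local factor has enough units to avoid the bad residue classes, and exploit the bipartite parity structure coming from the $\mathbb{Z}_{2^m}$ factor in the even case---precisely the content of Theorems~\ref{leq2} and~\ref{leq3}. One minor caveat: your first candidate $c = -\tfrac{a+b}{2}$ fails in general (it would require $a-b$ to be a unit), but the coordinatewise avoidance argument you fall back on, with the counting bound $2p_i^{n_i-1} < p_i^{n_i}$ for $p_i \geq 3$, is exactly the right mechanism and matches the paper's construction.
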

\begin{conjecture}\cite{Jain2023}
Let $G(\mathbb{Z}_n)$ be a unit graph and $H$ be a $|V|\times|E|$ incidence matrix of $G(\mathbb{Z}_n)$ where n is any natural number. Then:
\begin{enumerate}
    \item If $2\in U(\mathbb{Z}_n)$, then the binary code generated by $H$ is a
    $$C_2(H)=\left[\frac{(n-1)\phi(n)}{2}, n-1, \phi(n)-1\right]_2$$
    code over the finite field $\mathbb{F}_2$.
    \item If $2\in N_U(\mathbb{Z}_n)$, then for any odd prime $q$, the $q$-ary code generated by $H$ is a
    $$C_q(H)=\left[\frac{n\phi(n)}{2}, n-1, \phi(n)\right]_q$$
    code over the finite field $\mathbb{F}_q$.
\end{enumerate}
\end{conjecture}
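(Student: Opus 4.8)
The plan is to verify the three code parameters $[N,k,d]$ separately: the length $N=|E|$ (the number of columns of $H$), the dimension $k=\operatorname{rank}H$, and the minimum distance $d$. Throughout I would lean on the structural facts that $G(\mathbb{Z}_n)$ is connected, that it is bipartite precisely when $2\in N_U(\mathbb{Z}_n)$, and on the diameter estimates of the preceding conjecture, together with the classical results of Chartrand \cite{Chartrand1966} (for $\operatorname{diam}\le 2$) and Plesn\'ik--Zn\'am \cite{Plesnik1989} (for bipartite graphs with $\operatorname{diam}\le 3$) that identify edge connectivity with minimum degree. These inputs rest on the ring-theoretic structure captured by Lemma \ref{unit+non-unit} and Theorem \ref{CRT}.

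First, for the length I would count vertex degrees. A vertex $a$ is adjacent to $b=u-a$ for each unit $u$, excluding the loop $b=a$, which occurs exactly when $u=2a$. When $2\in U(\mathbb{Z}_n)$ we have $2a\in U(\mathbb{Z}_n)$ iff $a\in U(\mathbb{Z}_n)$, so the degree is $\phi(n)-1$ for units and $\phi(n)$ for non-units; when $2\in N_U(\mathbb{Z}_n)$ the element $2a$ is never a unit, so every vertex has degree $\phi(n)$. The handshake lemma then gives $|E|=\tfrac12\big(\phi(n)(\phi(n)-1)+(n-\phi(n))\phi(n)\big)=\tfrac{(n-1)\phi(n)}{2}$ in the first case and $|E|=\tfrac{n\phi(n)}{2}$ in the second, matching the stated lengths, while simultaneously recording $\delta(G)=\phi(n)-1$ and $\delta(G)=\phi(n)$ respectively.

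Next, for the dimension I would show $\operatorname{rank}H=n-1$. In the binary case each column of $H$ carries exactly two $1$'s, so the all-ones vector is a left null vector and $\operatorname{rank}H\le n-1$; connectivity (a spanning tree already contributes $n-1$ independent rows) forces equality. In the $q$-ary case, when $2\in N_U(\mathbb{Z}_n)$ every unit is odd, so adjacent vertices have opposite parity and $G(\mathbb{Z}_n)$ is bipartite with parts the even and odd residues; assigning $+1$ and $-1$ to the two parts yields a left null vector over $\mathbb{F}_q$, again giving $\operatorname{rank}H\le n-1$, with connectivity supplying the reverse inequality.

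The minimum distance is the crux. I would identify each codeword $c^{\top}H$ with the edge set it marks: its support is exactly the set of edges $\{u,w\}$ with $c_u\ne c_w$, i.e.\ the cut determined by the level sets of $c$. Hence the weight of a nonzero codeword equals the number of bichromatic edges of a vertex coloring, and since merging two color classes never creates bichromatic edges, the minimum is attained by a two-set partition; therefore $d=\min_{\emptyset\ne S\subsetneq V}|\partial S|=\lambda(G)$. Invoking $\operatorname{diam}\le 2$ with \cite{Chartrand1966} in the case $2\in U(\mathbb{Z}_n)$, and $\operatorname{diam}\le 3$ together with bipartiteness and \cite{Plesnik1989} in the case $2\in N_U(\mathbb{Z}_n)$, yields $\lambda(G)=\delta(G)$, hence $d=\phi(n)-1$ and $d=\phi(n)$ respectively. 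The main obstacle I anticipate is precisely this step: one must argue rigorously that no cleverly chosen multivalued coefficient vector $c\in\mathbb{F}_q^{V}$ produces a codeword lighter than the global minimum cut (the color-merging reduction), and then secure $\lambda(G)=\delta(G)$, which is exactly where the diameter bounds and, for even $n$, the bipartite structure become indispensable. A secondary point needing care is choosing which incidence matrix (the signed one in the $q$-ary setting) realizes the cut interpretation and the rank count at the same time.
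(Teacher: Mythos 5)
Your proposal is correct, and at the strategic level it coincides with the paper: both arguments reduce the minimum distance to the edge connectivity $\lambda(G)$ and then identify $\lambda(G)$ with the minimum degree via Chartrand \cite{Chartrand1966} when $2\in U(\mathbb{Z}_n)$ (diameter at most $2$) and via Plesn\'{\i}k--Zn\'{a}m \cite{Plesnik1989} when $2\in N_U(\mathbb{Z}_n)$ (bipartite, diameter at most $3$), drawing the connectivity, diameter and bipartiteness inputs from the first conjecture. The difference is in how the coding-theoretic ingredients are obtained. The paper routes everything through the CRT decomposition $\mathbb{Z}_n\cong\mathbb{Z}_{p_1^{n_1}}\oplus\cdots\oplus\mathbb{Z}_{p_r^{n_r}}$ and its structural theorems (Theorems~\ref{leq2} and~\ref{leq3}), and then cites known results for the code parameters: the edge count is Theorem~\ref{Number of edges}, the dimension comes from Theorem 4.2.4 of \cite{LingXing2004}, and the equality of the minimum distance with $\lambda(G)$ (and of the dual distance with the girth) is quoted from \cite{Dankelmann2013}. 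You instead prove these ingredients from scratch: the length by degree counting and the handshake lemma directly in $\mathbb{Z}_n$, the dimension by exhibiting the one-dimensional left kernel (all-ones over $\mathbb{F}_2$, the $\pm 1$ bipartition vector over $\mathbb{F}_q$) together with a spanning-tree rank bound, and $d=\lambda(G)$ by the cut interpretation plus class merging---this last item being precisely the content of the theorem of \cite{Dankelmann2013} that the paper invokes. Your route buys self-containedness; the paper's buys brevity and reuses the CRT machinery it needs anyway for connectivity and the diameter bounds, which both arguments consume. One point you flagged does need to be closed, and it closes cleanly: over $\mathbb{F}_q$ the support of $c^{\top}H$ for the \emph{unsigned} incidence matrix consists of the edges with $c_u\neq -c_w$, not $c_u\neq c_w$; but since $2\in N_U(\mathbb{Z}_n)$ forces every unit to be odd, $G(\mathbb{Z}_n)$ is bipartite, and negating the rows of $H$ indexed by one part of the bipartition yields a signed incidence matrix with the same row space, after which your cut interpretation and rank computation apply verbatim. (Also, your aside that bipartiteness holds \emph{precisely} when $2\in N_U(\mathbb{Z}_n)$ is stronger than needed and fails for small $n$---$G(\mathbb{Z}_3)$ is a path---but only the forward implication is ever used.)
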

For any $ n \in \mathbb{N} $, by Theorem~\ref{CRT},
if $n = p_1^{n_1} p_2^{n_2} \dots p_r^{n_r}$,
 where $ p_1, p_2, \dots, p_r $ are distinct prime numbers and $ n_1, n_2, \dots, n_r $ are positive integers, then the ring of integers modulo $ n $, denoted by $ \mathbb{Z}_n $, is isomorphic to the direct sum of the rings $ \mathbb{Z}_{p_1^{n_1}}, \mathbb{Z}_{p_2^{n_2}}, \dots, \mathbb{Z}_{p_r^{n_r}} $. More precisely,
 $$\mathbb{Z}_n \cong \mathbb{Z}_{p_1^{n_1}} \oplus \mathbb{Z}_{p_2^{n_2}} \oplus \dots \oplus \mathbb{Z}_{p_r^{n_r}}.$$
We now proceed to prove these conjectures and analyze their implications for unit graphs and their associated linear codes.

\subsection{Characterization of the unit graph $G(\mathbb{Z}_{p_{1}^{n_{1}}} \oplus \mathbb{Z}_{p_{2}^{n_{2}}} \oplus \cdots \oplus \mathbb{Z}_{p_{r}^{n_{r}}})$:\\}

This  section explores some key characterizations of the unit graph $G(\mathbb{Z}_{p_{1}^{n_{1}}} \oplus \mathbb{Z}_{p_{2}^{n_{2}}} \oplus \cdots \oplus \mathbb{Z}_{p_{r}^{n_{r}}})$ over the ring $\mathbb{Z}_{p_{1}^{n_{1}}} \oplus \mathbb{Z}_{p_{2}^{n_{2}}} \oplus \cdots \oplus \mathbb{Z}_{p_{r}^{n_{r}}}$. Throughout this section and following sections,  we denote the subsets $U(R)$ and $N_U (R)$ of the ring with unity $R$ as the units and non-units of $R$ respectively.

\begin{Theorem} \label{Number of edges}
The number of edges in the unit graph $G(\mathbb{Z}_{n_1} \oplus \mathbb{Z}_{n_2} \oplus \ldots \oplus \mathbb{Z}_{n_r})$, where $n_1,n_2,\ldots,n_r$ are
any positive integers is $\frac{1}{2}(n_1 n_2 \ldots n_r-1)\phi(n_1)\phi(n_2)\ldots\phi(n_r)$ if all $n_i$ are odd, and $\frac{1}{2}n_1 n_2 \ldots n_r\phi(n_1)\phi(n_2)\ldots\phi(n_r)$ otherwise.
\end{Theorem}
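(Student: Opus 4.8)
The plan is to compute the degree of every vertex and then apply the handshaking lemma. Write $R = \mathbb{Z}_{n_1} \oplus \cdots \oplus \mathbb{Z}_{n_r}$, set $N = |R| = n_1 n_2 \cdots n_r$, and observe that since the units of a direct sum of rings are exactly the tuples whose every coordinate is a unit (no coprimality of the $n_i$ is needed), we have $|U(R)| = \phi(n_1)\phi(n_2)\cdots\phi(n_r)$; denote this quantity by $\Phi$. For a fixed vertex $v \in R$, the translation $w \mapsto v + w$ is a bijection of $R$, so there are exactly $\Phi$ elements $w$ with $v + w \in U(R)$. In the simple graph $G(R)$ the neighbours of $v$ are the $w \neq v$ with $v + w \in U(R)$; the only way $w = v$ can appear among those $\Phi$ solutions is if $2v \in U(R)$. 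Hence $\deg(v) = \Phi - 1$ when $2v \in U(R)$, and $\deg(v) = \Phi$ otherwise.

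First I would record this cleanly by setting $A = \{v \in R : 2v \in U(R)\}$, so that $\deg(v) = \Phi - 1$ for $v \in A$ and $\deg(v) = \Phi$ for $v \notin A$. The handshaking lemma then yields
\[
|E| \;=\; \frac{1}{2}\sum_{v \in R}\deg(v) \;=\; \frac{1}{2}\bigl(\Phi N - |A|\bigr),
\]
so the whole theorem reduces to determining the single integer $|A|$, i.e. counting the $v$ for which $2v$ is a unit. This is the step that carries the real content and where the dichotomy in the statement originates; everything else is bookkeeping.

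To compute $|A|$ I would argue coordinatewise, using that $2v = (2v_1, \ldots, 2v_r) \in U(R)$ iff each $2v_i$ is a unit in $\mathbb{Z}_{n_i}$. If every $n_i$ is odd, then $2$ is itself a unit in each $\mathbb{Z}_{n_i}$, so multiplication by $2$ is a unit-preserving bijection and $2v_i \in U(\mathbb{Z}_{n_i})$ exactly when $v_i \in U(\mathbb{Z}_{n_i})$; therefore $A = U(R)$ and $|A| = \Phi$, giving $|E| = \tfrac{1}{2}(\Phi N - \Phi) = \tfrac{1}{2}(N-1)\Phi$, the first formula. If instead some $n_j$ is even, the key observation is that $2v_j \bmod n_j$ is even and hence shares the factor $2$ with the even modulus $n_j$, so it is never invertible; thus no $v$ lies in $A$, whence $|A| = 0$ and $|E| = \tfrac{1}{2}\Phi N$, the second formula. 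The main obstacle is nothing deep but precisely this parity argument — the fact that an even residue modulo an even modulus can never be a unit — since it is what forces $|A|$ to collapse to $0$ and thereby produces the clean split between the two cases.
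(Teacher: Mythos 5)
Your proof is correct, and there is in fact nothing in the paper to compare it against: the paper states Theorem~\ref{Number of edges} with no proof at all. Your argument is the standard (and natural) one for such edge counts: identify $U(R)$ for the direct sum as the product of the coordinate unit groups, use the translation bijection $w \mapsto v+w$ to get $\deg(v) = \Phi - 1$ or $\Phi$ according as $2v$ is or is not a unit, apply the handshaking lemma, and then resolve the dichotomy by counting $|A| = |\{v : 2v \in U(R)\}|$. The two case computations are also sound: when every $n_i$ is odd, multiplication by $2$ is a unit-preserving bijection in each coordinate, so $|A| = \Phi$; when some $n_j$ is even, an even residue modulo an even modulus shares the factor $2$ with the modulus and can never be invertible, so $A = \emptyset$. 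This cleanly yields $\tfrac{1}{2}(N-1)\Phi$ and $\tfrac{1}{2}N\Phi$ respectively, matching the statement, and your write-up would serve as a complete proof of the theorem that the paper leaves unproved.
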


\begin{Theorem} \label{bipartite}
Let $G(\mathbb{Z}_{n_1} \oplus \mathbb{Z}_{n_2} \oplus \ldots \oplus
\mathbb{Z}_{n_r})$ be a
unit graph, where $n_1,n_2,\ldots,n_r$ are any positive integers. If
exactly one of $n_1,n_2,\ldots,n_r$ is even, then $G(\mathbb{Z}_{n_1} \oplus \mathbb{Z}_{n_2} \oplus \ldots \oplus
\mathbb{Z}_{n_r})$ is bipartite.
\end{Theorem}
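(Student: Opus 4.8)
The plan is to exhibit an explicit bipartition of the vertex set using the parity of the single even coordinate. Write $R = \mathbb{Z}_{n_1} \oplus \cdots \oplus \mathbb{Z}_{n_r}$ and recall that in a finite direct sum of rings an element is a unit precisely when each of its coordinates is a unit, so that $U(R) = U(\mathbb{Z}_{n_1}) \times \cdots \times U(\mathbb{Z}_{n_r})$. Let $j$ be the unique index for which $n_j$ is even. The reduction map $\pi \colon \mathbb{Z}_{n_j} \to \mathbb{Z}_2$ sending $x$ to $x \bmod 2$ is well defined because $2 \mid n_j$, so the notion of an element of $\mathbb{Z}_{n_j}$ being \emph{even} or \emph{odd} makes sense.

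First I would record the key arithmetic fact that every unit of $\mathbb{Z}_{n_j}$ is odd: a unit is coprime to $n_j$, and since $n_j$ is even such an element cannot itself be even. Consequently, for $x, y \in \mathbb{Z}_{n_j}$ the sum $x + y$ can be a unit only when $x$ and $y$ have opposite parities, since the sum of two even or of two odd elements is even and hence a non-unit.

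Next I would define the partition $V_0 = \{a \in R : a_j \text{ is even}\}$ and $V_1 = \{a \in R : a_j \text{ is odd}\}$, which clearly covers $V(G)$ disjointly. Given any edge $ab$ of the graph, the adjacency condition $a + b \in U(R)$ forces, by the product description of the units, the $j$-th coordinate $a_j + b_j$ to lie in $U(\mathbb{Z}_{n_j})$. By the previous step $a_j$ and $b_j$ then have opposite parities, so one endpoint lies in $V_0$ and the other in $V_1$. Equivalently, if $a$ and $b$ lie in the same part their $j$-th coordinates share the same parity, $a_j + b_j$ is even, hence not a unit, and $ab$ is not an edge. Thus every edge joins $V_0$ to $V_1$, and the graph is bipartite.

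I expect no serious obstacle here; the one point that must be handled with care is the use of the hypothesis that \emph{exactly} one $n_i$ is even. It guarantees that the index $j$ is unambiguously defined and that the parity projection in the $j$-th coordinate is the constraint driving the argument; this is precisely the situation that occurs for $\mathbb{Z}_n$, since the factorization $n = p_1^{n_1}\cdots p_r^{n_r}$ contributes at most one even prime power. The remaining verifications (well-definedness of parity and the product structure of $U(R)$) are routine and could also be cited from the ring-theoretic preliminaries, in particular the unit/non-unit behaviour recorded in Lemma~\ref{unit+non-unit}.
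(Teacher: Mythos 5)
Your proposal is correct and follows essentially the same route as the paper: both partition the vertices by the parity of the coordinate belonging to the unique even modulus and observe that every edge must cross the partition. In fact your write-up supplies the justification the paper leaves implicit (units modulo an even integer are odd, and the componentwise description of $U(R)$ forces the $j$-th coordinate of an edge-sum to be a unit), so it is a more complete rendering of the same argument.
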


\begin{proof}
Assuming that $ n_r $ is even and all $n_1,n_2,\ldots,n_{r-1}$ are odd,
let us define the sets $ V_1 $ and $ V_2 $ as follows:
$$V_1 = \{(a_1, a_2, \ldots, a_r) \in \mathbb{Z}_{n_1} \oplus \mathbb{Z}_{n_2} \oplus \ldots \oplus \mathbb{Z}_{n_r} \mid a_r = 2\beta, \beta \in \mathbb{Z} \},$$
and
$$V_2 = \{(a_1, a_2, \ldots, a_r) \in \mathbb{Z}_{n_1} \oplus \mathbb{Z}_{n_2} \oplus \ldots \oplus \mathbb{Z}_{n_r} \mid a_r = 2\beta + 1, \beta \in \mathbb{Z} \}.$$
These sets partition $ \mathbb{Z}_{n_1} \oplus \mathbb{Z}_{n_2} \oplus \ldots \oplus
\mathbb{Z}_{n_r} $ in such a way that no vertex in $ V_1 $ is
adjacent to any vertex in $ V_1 $, and the same holds for $ V_2 $.
\end{proof}

\begin{Theorem} \label{leq2}
The following statements hold in the unit graph $G(\mathbb{Z}_{p_{1}^{n_{1}}} \oplus \mathbb{Z}_{p_{2}^{n_{2}}} \oplus \cdots \oplus \mathbb{Z}_{p_{r}^{n_{r}}})$, for $p_1$, $p_2$, $\ldots$, $p_r$ all being odd primes.
\begin{enumerate}
\item $G(\mathbb{Z}_{p_{1}^{n_{1}}} \oplus \mathbb{Z}_{p_{2}^{n_{2}}} \oplus \cdots \oplus \mathbb{Z}_{p_{r}^{n_{r}}})$ is connected.
\item $diam(G(\mathbb{Z}_{p_{1}^{n_{1}}} \oplus \mathbb{Z}_{p_{2}^{n_{2}}} \oplus \cdots \oplus \mathbb{Z}_{p_{r}^{n_{r}}}))\leq2$.
\item $\lambda(G(\mathbb{Z}_{p_{1}^{n_{1}}} \oplus
\mathbb{Z}_{p_{2}^{n_{2}}} \oplus \cdots \oplus
\mathbb{Z}_{p_{r}^{n_{r}}}))= \phi(p_1^{n_1}) \phi( p_2^{n_2})\ldots
\phi(p_r^{n_r})-1$.
\end{enumerate}
\end{Theorem}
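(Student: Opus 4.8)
The plan is to deduce all three parts from two elementary facts: the exact degree of each vertex, and the existence of a common neighbour for any two non-adjacent vertices. Write $R=\mathbb{Z}_{p_1^{n_1}}\oplus\cdots\oplus\mathbb{Z}_{p_r^{n_r}}$. An element $(a_1,\dots,a_r)$ is a unit of $R$ exactly when each $a_i$ is a unit of $\mathbb{Z}_{p_i^{n_i}}$, i.e. $p_i\nmid a_i$, so $|U(R)|=\phi(p_1^{n_1})\cdots\phi(p_r^{n_r})$. Since each $p_i$ is odd, $2$ is a unit in every factor and hence $2\in U(R)$; I will use this repeatedly.

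First I would pin down degrees. For a fixed vertex $a$, the neighbourhood condition $a+b\in U(R)$ describes the coset $U(R)-a$, which has $\phi(p_1^{n_1})\cdots\phi(p_r^{n_r})$ elements. This coset contains $a$ itself iff $2a\in U(R)$, and because $2$ is a unit this happens iff $a\in U(R)$. Removing the disallowed self-loop gives $\deg(a)=\phi(p_1^{n_1})\cdots\phi(p_r^{n_r})-1$ when $a$ is a unit and $\deg(a)=\phi(p_1^{n_1})\cdots\phi(p_r^{n_r})$ otherwise. In particular $\delta(G)=\phi(p_1^{n_1})\cdots\phi(p_r^{n_r})-1$, attained at every unit, which already matches the target value in part (3).

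The core step is part (2). Take distinct non-adjacent vertices $a=(a_1,\dots,a_r)$ and $b=(b_1,\dots,b_r)$, so $a+b\notin U(R)$; I would build a common neighbour $c=(c_1,\dots,c_r)$ coordinatewise. In the $i$-th factor I require $a_i+c_i$ and $b_i+c_i$ to be units of $\mathbb{Z}_{p_i^{n_i}}$, that is $c_i\not\equiv-a_i$ and $c_i\not\equiv-b_i\pmod{p_i}$. These forbid at most two residue classes modulo $p_i$, and since $p_i\ge 3$ at least one admissible class survives; any lift of it is a valid $c_i$. The resulting $c$ satisfies $a+c,\,b+c\in U(R)$, and it cannot equal $a$ or $b$: for instance $c=a$ would give $b+c=a+b\in U(R)$, contradicting non-adjacency. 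Hence $a$--$c$--$b$ is a genuine path of length two, so $d(a,b)=2$; together with the adjacent pairs this yields $\text{diam}(G)\le 2$. Part (1) is then immediate, since every pair of vertices lies at finite distance. For part (3), $G$ is connected with $\text{diam}(G)\le 2$, so Chartrand's theorem \cite{Chartrand1966} gives $\lambda(G)=\delta(G)=\phi(p_1^{n_1})\cdots\phi(p_r^{n_r})-1$.

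The only delicate point is the coordinatewise construction in part (2): it must never run out of admissible residues. This is guaranteed precisely because every $p_i$ is odd, so $p_i\ge 3$ and removing the (at most two) forbidden classes still leaves one. Were some $p_i=2$, a coordinate could be blocked and the common-neighbour argument would break---this is exactly the boundary at which the diameter can rise to $3$, handled separately in the even case.
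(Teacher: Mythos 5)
Your proposal is correct, and its overall skeleton matches the paper's: show every pair of non-adjacent vertices has a common neighbour (hence $\mathrm{diam}\le 2$ and connectivity), then invoke Chartrand's theorem $\lambda=\delta$ for part (3). Where you genuinely diverge is in how the common neighbour is produced. The paper splits into cases according to whether $\bar a$ and $\bar b$ are units or non-units, writes each vertex in a normal form (non-unit coordinates, unit coordinates, zero coordinates), and exhibits a case-dependent explicit $\bar c$; most of the verification in those cases is left unelaborated. You instead give a single uniform argument: in the $i$-th coordinate the conditions $a_i+c_i,\,b_i+c_i\in U(\mathbb{Z}_{p_i^{n_i}})$ forbid at most two residue classes modulo $p_i$, and $p_i\ge 3$ leaves an admissible one, so $c$ exists with no case analysis at all. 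This is cleaner and also more complete in two respects the paper glosses over: you check $c\neq a,b$ (using non-adjacency of $a,b$), so the path is legitimate in a simple graph, and you derive the minimum degree rigorously via the coset $U(R)-a$ and the fact that $2\in U(R)$, rather than merely asserting $\delta(G)=\phi(p_1^{n_1})\cdots\phi(p_r^{n_r})-1$ as the paper does. Your closing remark that the argument breaks exactly when some $p_i=2$ also correctly identifies why the even case (Theorem~\ref{leq3}) needs diameter $3$.
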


\begin{proof}
(1) Let $\mathbb{Z}_{p_{1}^{n_{1}}} \oplus \mathbb{Z}_{p_{2}^{n_{2}}} \oplus \cdots \oplus \mathbb{Z}_{p_{r}^{n_{r}}}= U(\mathbb{Z}_{p_{1}^{n_{1}}} \oplus \mathbb{Z}_{p_{2}^{n_{2}}} \oplus \cdots \oplus \mathbb{Z}_{p_{r}^{n_{r}}}) \cup N_{U}(\mathbb{Z}_{p_{1}^{n_{1}}} \oplus \mathbb{Z}_{p_{2}^{n_{2}}} \oplus \cdots \oplus \mathbb{Z}_{p_{r}^{n_{r}}})$.
  Without loss of generality we consider an element $\bar{a}\in\mathbb{Z}_{p_{1}^{n_{1}}} \oplus \mathbb{Z}_{p_{2}^{n_{2}}} \oplus \cdots \oplus \mathbb{Z}_{p_{r}^{n_{r}}}$ as $\bar{a}=(\alpha_{1}p_{1},\ldots,\alpha_{i}p_{i},u_{i+1},\ldots,u_{i+j},\underbrace{0,\ldots,0}_{k-\textrm{times}})$, where $i+j+k=r$. Let $\bar{a}$, $\bar{b}$ $\in\mathbb{Z}_{p_{1}^{n_{1}}} \oplus \mathbb{Z}_{p_{2}^{n_{2}}} \oplus \cdots \oplus \mathbb{Z}_{p_{r}^{n_{r}}}$. Here, we consider the following cases:

  \textbf{Case 1 :} Let $\bar{a}$, $\bar{b}$ $\in U(\mathbb{Z}_{p_{1}^{n_{1}}} \oplus \mathbb{Z}_{p_{2}^{n_{2}}} \oplus \cdots \oplus \mathbb{Z}_{p_{r}^{n_{r}}})$. 

    \textbf{Case 2 :} Next suppose $\bar{a}\in U(\mathbb{Z}_{p_{1}^{n_{1}}} \oplus \mathbb{Z}_{p_{2}^{n_{2}}} \oplus \cdots \oplus \mathbb{Z}_{p_{r}^{n_{r}}})$ and $\bar{b}\in N_{U}(\mathbb{Z}_{p_{1}^{n_{1}}} \oplus \mathbb{Z}_{p_{2}^{n_{2}}} \oplus \cdots \oplus \mathbb{Z}_{p_{r}^{n_{r}}})$. 

   \textbf{ Case 3:} Finally suppose $\bar{a}$, $\bar{b}$ $\in N_{U}(\mathbb{Z}_{p_{1}^{n_{1}}} \oplus \mathbb{Z}_{p_{2}^{n_{2}}} \oplus \cdots \oplus \mathbb{Z}_{p_{r}^{n_{r}}})$. 

  If $\bar{a}$ and $\bar{b}$ are adjacent, then $d(\bar{a},\bar{b})=1$. If not, let us choose
  \begin{align*}
    \bar{c}=&(w_{1},\ldots,w_{i_1},v_{i_1+1},\ldots,v_{i},u_{i+1},\ldots,u_{i+j_1},\underbrace{0,\ldots,0}_{j_2-\textrm{times}},w_{i+j+1},\ldots,w_{i+j+k_1},\\
            &v_{i+j+k_1+1},\ldots,v_{i+j+k_1+k_2},w_{i+j+k_1+k_2+1},\ldots,w_{i+j+k}).
  \end{align*}

  From above discussion we get, $\bar{a}$ and $\bar{b}$ are either adjacent or there exist $\bar{c}$ such that $[\bar{a},\bar{c}]$, $[\bar{c},\bar{b}]$ $\in G(\mathbb{Z}_{p_{1}^{n_{1}}} \oplus \mathbb{Z}_{p_{2}^{n_{2}}} \oplus \cdots \oplus \mathbb{Z}_{p_{r}^{n_{r}}})$.
  
  $d(\bar{a},\bar{b})\leq 2$ for all $\bar{a}$, $\bar{b}$ $\in\mathbb{Z}_{p_{1}^{n_{1}}} \oplus \mathbb{Z}_{p_{2}^{n_{2}}} \oplus \cdots \oplus \mathbb{Z}_{p_{r}^{n_{r}}}$. Hence, $G(\mathbb{Z}_{p_{1}^{n_{1}}} \oplus \mathbb{Z}_{p_{2}^{n_{2}}} \oplus \cdots \oplus \mathbb{Z}_{p_{r}^{n_{r}}})$ is connected.

(2) From the proof of (1), we have $d(\bar{a},\bar{b})\leq 2$ for all $\bar{a}$, $\bar{b}$ $\in\mathbb{Z}_{p_{1}^{n_{1}}} \oplus \mathbb{Z}_{p_{2}^{n_{2}}} \oplus \cdots \oplus \mathbb{Z}_{p_{r}^{n_{r}}}$. Hence $\textrm{diam}(G(\mathbb{Z}_{p_{1}^{n_{1}}} \oplus \mathbb{Z}_{p_{2}^{n_{2}}} \oplus \cdots \oplus \mathbb{Z}_{p_{r}^{n_{r}}}))\leq 2$.

(3) The minimum degree of $G(\mathbb{Z}_{p_{1}^{n_{1}}} \oplus \mathbb{Z}_{p_{2}^{n_{2}}} \oplus \cdots \oplus \mathbb{Z}_{p_{r}^{n_{r}}})$ is $\phi(p_1^{n_1}) \phi( p_2^{n_2})\ldots
\phi(p_r^{n_r})-1$. From \cite{Chartrand1966} we get, for a connected graph with a diameter less or equal to $2$, edge connectivity of the graph is equal to the minimum degree of the graph. Therefore, $\lambda(G(\mathbb{Z}_{p_{1}^{n_{1}}} \oplus
\mathbb{Z}_{p_{2}^{n_{2}}} \oplus \cdots \oplus
\mathbb{Z}_{p_{r}^{n_{r}}}))= \phi(p_1^{n_1}) \phi( p_2^{n_2})\ldots
\phi(p_r^{n_r})-1$.
\end{proof}

The following result determines the parameters of a binary linear code, along with its dual, generated by the incidence matrix of the unit graph $G(\mathbb{Z}_{p_{1}^{n_{1}}} \oplus \mathbb{Z}_{p_{2}^{n_{2}}}
\oplus \cdots \oplus \mathbb{Z}_{p_{r}^{n_{r}}})$, where $p_1$, $p_2$, $\ldots$, $p_r$ are distinct odd primes.

\begin{Theorem}\label{binary code 1  }
Let $G(\mathbb{Z}_{p_{1}^{n_{1}}} \oplus \mathbb{Z}_{p_{2}^{n_{2}}} \oplus \cdots \oplus \mathbb{Z}_{p_{r}^{n_{r}}})$ be the unit graph and let $H$ be its incidence matrix of size $|V| \times |E|$. If $p_1$, $p_2$, $\ldots$, $p_r$ are all distinct odd prime numbers, then the binary linear code $C_2(H)$ generated by $H$ over the finite field $\mathbb{F}_2$ is a $[\frac{1}{2}(p_{1}^{n_{1}}p_{2}^{n_{2}}\ldots
p_{r}^{n_{r}}
-1)\phi(p_{1}^{n_{1}})\phi(p_{2}^{n_{2}})\ldots\phi(p_{r}^{n_{r}}),\;p_{1}^{n_{1}}p_{2}^{n_{2}}\ldots
p_{r}^{n_{r}}-1,\;\phi(p_{1}^{n_{1}})\phi(p_{2}^{n_{2}})\ldots\phi(p_{r}^{n_{r}})-1]_2$-linear code.
Moreover, the dual code $C_2(H)^\bot$ is a $[\frac{1}{2}(p_{1}^{n_{1}}p_{2}^{n_{2}}\ldots p_{r}^{n_{r}}
-1)\phi(p_{1}^{n_{1}})\phi(p_{2}^{n_{2}})\ldots\phi(p_{r}^{n_{r}}),\;\frac{1}{2}(p_{1}^{n_{1}}p_{2}^{n_{2}}\ldots p_{r}^{n_{r}}-1)(\phi(p_{1}^{n_{1}})\phi(p_{2}^{n_{2}})\ldots\phi(p_{r}^{n_{r}})-2),\;3]_2$-linear code.
\end{Theorem}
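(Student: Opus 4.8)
The plan is to identify $C_2(H)$ with a standard graph code and then read off all three parameters, mostly from results already in hand. Write $N = p_1^{n_1}\cdots p_r^{n_r} = |V|$ and $\Phi = \phi(p_1^{n_1})\cdots\phi(p_r^{n_r}) = \phi(N)$; since every $p_i$ is odd, $N$ is odd, so Theorem~\ref{Number of edges} applies in its ``all $n_i$ odd'' form and yields the length $|E| = \tfrac12(N-1)\Phi$. The key structural observation is that each column of the incidence matrix $H$ has exactly two nonzero entries, so a combination $\sum_{v\in S} r_v$ of rows (indexed by $S \subseteq V$) has its $e$-th coordinate equal to the parity of $|S \cap e|$. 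Hence $C_2(H)$ is precisely the \emph{cut space} (cocycle code) of $G$, whose nonzero words are the edge cuts $E(S,\bar S)$ with $\emptyset \ne S \subsetneq V$.

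From this identification the dimension is immediate: the cut space of a connected graph on $|V|$ vertices has dimension $|V|-1$, and $G$ is connected by Theorem~\ref{leq2}(1), so $\dim C_2(H) = N - 1$. (Equivalently, the rows of $H$ sum to zero over $\mathbb{F}_2$, forcing $\operatorname{rank}(H) \le N-1$, while the cyclomatic identity $\dim\ker H = |E| - |V| + 1$ gives equality.) For the minimum distance, the weight of the codeword attached to $S$ is exactly $|E(S,\bar S)|$, so the minimum nonzero weight equals the global minimum cut, which for a connected graph is the edge connectivity $\lambda(G)$. By Theorem~\ref{leq2}(3), $\lambda(G) = \Phi - 1$, so $d(C_2(H)) = \Phi - 1$, completing the parameters of the primal code.

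For the dual, $C_2(H)^\perp$ is by definition $\{x \in \mathbb{F}_2^{|E|} : Hx = 0\}$, i.e.\ the even subgraphs of $G$, which is the \emph{cycle space}. Its length is again $|E|$ and its dimension is $|E| - (N-1) = \tfrac12(N-1)\Phi - (N-1) = \tfrac12(N-1)(\Phi - 2)$, matching the claim. Since $G$ is simple it has no nonzero even subgraph of weight $1$ or $2$ (a single edge leaves odd endpoints, and two edges would have to be parallel), so $d(C_2(H)^\perp) \ge 3$, with equality precisely when $G$ contains a triangle. Thus it remains to exhibit one triangle, i.e.\ three distinct $\bar a, \bar b, \bar c$ with $\bar a + \bar b,\ \bar b + \bar c,\ \bar c + \bar a$ all units.

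I expect this last step to carry the real content. Working coordinatewise through the CRT decomposition, it suffices that in each factor $\mathbb{Z}_{p_i^{n_i}}$ the three chosen residues have pairwise sums not divisible by $p_i$. A convenient choice is $\bar a = \bar 0$, $\bar b = \bar 1$, and $\bar c$ a unit with $\bar 1 + \bar c$ also a unit (so $c_i \not\equiv 0,\,-1 \pmod{p_i}$ in each coordinate); such a $\bar c \ne \bar 1$ exists as soon as some $p_i \ge 5$ or some $n_i \ge 2$, and the few remaining small configurations (products of copies of $\mathbb{Z}_3$) are handled by an explicit triangle such as $(1,1,\dots),(1,0,\dots),(0,1,\dots)$. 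This gives $g_r(G) = 3$ and hence $d(C_2(H)^\perp) = 3$, the sole genuine exception being $\mathbb{Z}_3$ itself, where $G$ is a tree and the dual code is trivial. The main subtlety throughout is purely combinatorial: encoding ``pairwise sums are units'' as a solvable residue condition in every CRT coordinate while keeping the three vertices distinct.
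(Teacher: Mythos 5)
Your proof is correct, and at the skeleton level it follows the same strategy as the paper: dimension $|V|-1$ from connectivity (Theorem~\ref{leq2}(1)), minimum distance from the edge connectivity $\lambda(G)=\phi(p_1^{n_1})\cdots\phi(p_r^{n_r})-1$ (Theorem~\ref{leq2}(3)), dual dimension by rank--nullity, and dual minimum distance equal to the girth. The difference is in execution. The paper outsources every coding-theoretic ingredient to citations --- Theorem 4.2.4(ii) of \cite{LingXing2004} for the dual dimension and Theorem 6 of \cite{Dankelmann2013} for the facts that the incidence-matrix code of a connected graph is $[|E|,|V|-1,\lambda(G)]_2$ and that its dual has minimum distance equal to the girth --- and then simply \emph{asserts} that the girth is $3$, with no supporting argument. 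You instead prove these ingredients from first principles by identifying $C_2(H)$ with the cut space and $C_2(H)^\perp$ with the cycle space, and, more importantly, you supply the one step of genuine content that the paper omits: an explicit triangle $\{\bar 0,\bar 1,\bar c\}$ with $\bar c$ and $\bar 1+\bar c$ both units, obtained by the CRT coordinate count (the condition $c_i\not\equiv 0,-1 \pmod{p_i}$ has $p_i^{n_i-1}(p_i-2)\ge 1$ solutions in each factor, with at least two choices globally unless $n=3$). This added care also exposes an exception the paper misses: for $n=3$ the graph $G(\mathbb{Z}_3)$ is a path, hence a tree, so the girth is undefined and the dual code is the zero code of dimension $0$; the stated dual minimum distance of $3$ is false (vacuous) there, and the theorem should exclude this case. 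One cosmetic remark: your fallback triangle for ``products of copies of $\mathbb{Z}_3$'' addresses a configuration that cannot occur here, since the $p_i$ are distinct, but this does no harm --- the only true exception is $\mathbb{Z}_3$ itself, exactly as you conclude. In short, your argument is self-contained and strictly more complete than the paper's own proof.
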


\begin{proof}
    By Theorem \ref{leq2}, the unit graph $G(\mathbb{Z}_{p_{1}^{n_{1}}} \oplus \mathbb{Z}_{p_{2}^{n_{2}}} \oplus \cdots \oplus \mathbb{Z}_{p_{r}^{n_{r}}})$ is connected. 

By Theorem 4.2.4(ii) of \cite{LingXing2004}, $\dim C_2(H)^\bot$ is $\frac{1}{2}(p_{1}^{n_{1}}p_{2}^{n_{2}}\ldots p_{r}^{n_{r}}-1)(\phi(p_{1}^{n_{1}})\phi(p_{2}^{n_{2}})\ldots\phi(p_{r}^{n_{r}})-2)$. Also from Theorem 6 of \cite{Dankelmann2013}, the minimum distance of the dual code is given by the  $ g_r(G(\mathbb{Z}_{p_{1}^{n_{1}}} \oplus \mathbb{Z}_{p_{2}^{n_{2}}} \oplus \cdots \oplus \mathbb{Z}_{p_{r}^{n_{r}}})) $.    Thus, we conclude that
$g_r(G(\mathbb{Z}_{p_{1}^{n_{1}}} \oplus \mathbb{Z}_{p_{2}^{n_{2}}} \oplus \cdots \oplus \mathbb{Z}_{p_{r}^{n_{r}}})) = 3$, which implies that the minimum distance of the dual code is $3$.
\end{proof}

\begin{Corollary}
The incidence matrix of order $ |V| \times |E| $ corresponding to the unit graph $G(\mathbb{Z}_{p_{1}^{n_{1}}} \oplus \mathbb{Z}_{p_{2}^{n_{2}}} \oplus \cdots \oplus \mathbb{Z}_{p_{r}^{n_{r}}}) = (V,E)$, where $ p_1, p_2, \ldots, p_r $ are distinct odd primes, generates a linear code whose dual exhibits both $ 2 $-error-detection and single-error-correction capabilities.
  \end{Corollary}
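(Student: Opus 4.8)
The plan is to obtain the error-detection and error-correction capabilities directly from the minimum distance of the dual code, since all the substantive work has already been carried out in Theorem~\ref{binary code 1  }. There it was shown that the dual code $C_2(H)^\bot$ is a code with minimum Hamming distance exactly $3$, the value being pinned down by the girth $g_r(G) = 3$ via Theorem~6 of \cite{Dankelmann2013}. This corollary is therefore a direct consequence of the classical relationship between minimum distance and error-handling capacity, and requires no fresh combinatorial or algebraic input.

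First I would recall the two standard bounds from coding theory (see, e.g., \cite{LingXing2004}): a linear code with minimum distance $d$ can detect up to $d-1$ errors and can correct up to $\lfloor (d-1)/2 \rfloor$ errors. Both follow from the geometry of Hamming balls around codewords: since any two distinct codewords differ in at least $d$ positions, an error pattern of weight at most $d-1$ can never carry one codeword onto another, which gives detection; and a received word stays strictly closer to the transmitted codeword than to any other whenever the error weight is at most $\lfloor (d-1)/2 \rfloor$, which gives correction via nearest-neighbour decoding.

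Next I would substitute $d = 3$, the minimum distance of $C_2(H)^\bot$ established in Theorem~\ref{binary code 1  }. For detection this yields $d - 1 = 2$, so the dual code detects up to two errors; for correction it yields $\lfloor 2/2 \rfloor = 1$, so the dual code corrects a single error. This establishes both claimed capabilities and completes the argument. The only real obstacle here is expository rather than mathematical: one must take care to invoke the detection and correction inequalities in the correct direction, and to use the fact that the minimum distance is \emph{exactly} $3$ (and not merely at least $3$) so that the two-error detection bound is sharp and no stronger single-error-correction claim is available.
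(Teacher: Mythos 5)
Your proposal is correct and matches the paper's intent exactly: the paper dismisses this corollary as ``straightforward,'' implicitly relying on precisely the argument you spell out --- taking the minimum distance $d = 3$ of $C_2(H)^\bot$ from the preceding theorem and applying the standard facts (Theorem 2.5.6 of \cite{LingXing2004}) that a code of minimum distance $d$ detects $d-1$ errors and corrects $\lfloor (d-1)/2 \rfloor$ errors. Your write-up simply makes explicit the steps the paper leaves unstated.
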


\begin{proof}
This is straightforward.
\end{proof}

Now, we aim to explore the linear code generated by the incidence matrix of the unit graph of type $G(\mathbb{Z}_{p_{1}^{n_{1}}} \oplus \cdots \oplus \mathbb{Z}_{p_{r}^{n_{r}}} \oplus \mathbb{Z}_{2^{m}})$. We start with the following theorem that studies some primary characterizations of the said graph.

\begin{Theorem} \label{leq3}
The following statements hold in the unit graph $G(\mathbb{Z}_{p_{1}^{n_{1}}} \oplus \cdots \oplus \mathbb{Z}_{p_{r}^{n_{r}}} \oplus \mathbb{Z}_{2^{m}})$, for $p_1$, $p_2$, $\ldots$, $p_r$ all being distinct odd primes.
\begin{enumerate}
\item $G(\mathbb{Z}_{p_{1}^{n_{1}}} \oplus \cdots \oplus \mathbb{Z}_{p_{r}^{n_{r}}} \oplus \mathbb{Z}_{2^{m}})$ is connected.
\item $diam(G(\mathbb{Z}_{p_{1}^{n_{1}}} \oplus \cdots \oplus \mathbb{Z}_{p_{r}^{n_{r}}} \oplus \mathbb{Z}_{2^{m}}))\leq3$.
\item $\lambda(G(\mathbb{Z}_{p_{1}^{n_{1}}} \oplus \cdots \oplus
\mathbb{Z}_{p_{r}^{n_{r}}} \oplus \mathbb{Z}_{2^{m}}))=2^{m-1}
\phi(p_1^{n_1}) \phi( p_2^{n_2})\ldots \phi(p_r^{n_r})$.
\end{enumerate}
\end{Theorem}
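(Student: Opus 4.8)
The plan is to prove (2) first, since connectivity (1) follows at once from the existence of a finite bound on all pairwise distances, and then to derive (3) from the bipartite structure. Writing $R = \mathbb{Z}_{p_{1}^{n_{1}}} \oplus \cdots \oplus \mathbb{Z}_{p_{r}^{n_{r}}} \oplus \mathbb{Z}_{2^{m}}$ and a typical element as $\bar a = (a_1,\ldots,a_r,a_{r+1})$, recall that $\bar a \sim \bar b$ precisely when $a_i + b_i$ is a unit of $\mathbb{Z}_{p_i^{n_i}}$ for every $i \le r$ and $a_{r+1}+b_{r+1}$ is odd. Since exactly one summand (namely $\mathbb{Z}_{2^m}$) is even, Theorem \ref{bipartite} applies: the graph is bipartite, with parts $V_1$ and $V_2$ distinguished by the parity of the last coordinate, and adjacency forces the two last coordinates to have opposite parity.

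For the diameter bound, the central claim I would establish is that any two vertices lying in the same part have a common neighbour, and hence are at distance at most $2$. Given such $\bar a, \bar b$ (say both with even last coordinate), I would construct $\bar c$ coordinatewise: for each $i \le r$ choose $c_i$ with $c_i \not\equiv -a_i$ and $c_i \not\equiv -b_i \pmod{p_i}$, which is possible because these forbidden classes exclude at most $2p_i^{\,n_i-1}$ of the $p_i^{\,n_i}$ elements and $p_i \ge 3$; for the last coordinate take $c_{r+1}$ odd, so that both $a_{r+1}+c_{r+1}$ and $b_{r+1}+c_{r+1}$ are odd. Then $\bar c$ is adjacent to both $\bar a$ and $\bar b$. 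For two vertices $\bar a, \bar b$ in different parts that are not already adjacent, I would pick any neighbour $\bar d$ of $\bar b$; such a neighbour exists because every vertex has positive degree (see below). Since $\bar d$ lies in the part opposite to $\bar b$, it lies in the same part as $\bar a$, so by the previous claim $\bar a$ and $\bar d$ are joined by a path of length at most $2$, and appending the edge $\bar d\bar b$ yields a walk of length at most $3$. This gives $\mathrm{diam}(G) \le 3$ and, in particular, connectivity.

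For (3), I would first observe that the graph is regular. Counting the neighbours of $\bar a$: as $b_i$ ranges over $\mathbb{Z}_{p_i^{n_i}}$ the sum $a_i+b_i$ runs bijectively over the ring and is a unit for exactly $\phi(p_i^{n_i})$ values, while in the last coordinate $a_{r+1}+b_{r+1}$ is odd for exactly $2^{m-1}$ choices of $b_{r+1}$; moreover $\bar b = \bar a$ is never counted, since $2\bar a$ has even last coordinate and is therefore a non-unit. Hence every vertex has degree exactly $2^{m-1}\phi(p_1^{n_1})\cdots\phi(p_r^{n_r})$, so $\delta(G) = 2^{m-1}\phi(p_1^{n_1})\cdots\phi(p_r^{n_r})$. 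Since $G$ is connected and bipartite with $\mathrm{diam}(G) \le 3$, the theorem of Plesník and Znám quoted in the introduction gives $\lambda(G) = \delta(G)$, which is the stated value.

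The main obstacle is the cross-part diameter argument: I must ensure that the common-neighbour construction never fails, and this is exactly where the hypothesis that all $p_i$ are odd is used, for if some $p_i = 2$ the two forbidden residue classes could exhaust a coordinate with $p_i^{n_i}=2$. The regularity count and the appeal to Plesník--Znám are then routine, the only delicate point being the remark that $2\bar a$ is never a unit, which is precisely what removes the $-1$ present in the odd-only degree of Theorem \ref{leq2}.
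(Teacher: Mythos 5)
Your proposal is correct, and it is organized genuinely differently from---and more completely than---the paper's own argument. The paper proves (1) and (2) by a case analysis on whether $\bar a$ and $\bar b$ are units or non-units of the ring, with further subcases on whether their last coordinates are units of $\mathbb{Z}_{2^{m}}$, exhibiting in each subcase an explicit intermediate vertex $\bar c$ (and, in one subcase, a pair $\bar c,\bar d$ giving a path of length $3$); several of these cases are left essentially unverified in the text. Your bipartition-based organization collapses all of that into a single lemma---two vertices in the same part have a common neighbour, proved by the residue-avoidance count (at most $2$ forbidden classes modulo $p_i$ against $p_i\geq 3$ available ones, which is exactly where oddness of the $p_i$ enters)---together with the reduction of the cross-part case to the same-part case by stepping to an arbitrary neighbour of $\bar b$; this replaces the paper's explicit three-vertex path construction. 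You also actually prove the degree formula (the graph is regular of degree $2^{m-1}\phi(p_1^{n_1})\cdots\phi(p_r^{n_r})$, using that $2\bar a$ always has even last coordinate and hence is never a unit), whereas the paper merely asserts this value as the minimum degree. Part (3) is then identical in both treatments: bipartiteness (Theorem~\ref{bipartite}), connectivity, $\mathrm{diam}\leq 3$, and the Plesn\'{\i}k--Zn\'am theorem \cite{Plesnik1989} give $\lambda=\delta$. What your route buys is rigor and uniformity---every existence claim is backed by a counting argument; what the paper's route buys is concrete path witnesses, at the price of a fragmented and partially incomplete case analysis.
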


\begin{proof}
 $(1):$ Let $\mathbb{Z}_{p_{1}^{n_{1}}} \oplus \cdots \oplus \mathbb{Z}_{p_{r}^{n_{r}}} \oplus \mathbb{Z}_{2^{m}}= U(\mathbb{Z}_{p_{1}^{n_{1}}} \oplus \cdots \oplus \mathbb{Z}_{p_{r}^{n_{r}}} \oplus \mathbb{Z}_{2^{m}}) \cup N_{U}(\mathbb{Z}_{p_{1}^{n_{1}}} \oplus \cdots \oplus \mathbb{Z}_{p_{r}^{n_{r}}} \oplus \mathbb{Z}_{2^{m}} )$.
  Without loss of generality we consider an element $\bar{a}\in \mathbb{Z}_{p_{1}^{n_{1}}} \oplus \cdots \oplus \mathbb{Z}_{p_{r}^{n_{r}}} \oplus \mathbb{Z}_{2^{m}}$ as $\bar{a}=(\alpha_{1}p_{1},\ldots,\alpha_{i_1}p_{i_1},u_{i_1+1},\ldots,u_{i_1+i_2},\underbrace{0,\ldots,0}_{i_3-\textrm{times}},t)$, where $i_1+i_2+i_3=r$. Let $\bar{a}$, $\bar{b}$ $\in\mathbb{Z}_{p_{1}^{n_{1}}} \oplus \cdots \oplus \mathbb{Z}_{p_{r}^{n_{r}}} \oplus \mathbb{Z}_{2^{m}}$. Here, we consider the following cases:

  \textbf{Case 1 :} Let $\bar{a}$, $\bar{b}$ $\in U(\mathbb{Z}_{p_{1}^{n_{1}}} \oplus \cdots \oplus \mathbb{Z}_{p_{r}^{n_{r}}} \oplus \mathbb{Z}_{2^{m}})$. 
   \textbf{ Case 2 : }Next suppose $\bar{a}\in U(\mathbb{Z}_{p_{1}^{n_{1}}} \oplus \cdots \oplus \mathbb{Z}_{p_{r}^{n_{r}}} \oplus \mathbb{Z}_{2^{m}})$ and $\bar{b}\in N_{U}(\mathbb{Z}_{p_{1}^{n_{1}}} \oplus \cdots \oplus \mathbb{Z}_{p_{r}^{n_{r}}} \oplus \mathbb{Z}_{2^{m}})$. Then,
        \begin{align*}
        \bar{a}=&(u_{1},\ldots,u_{i_1},u_{i_1+1},\ldots,u_{i_1+i_2},u_{i_1+i_2+1},\ldots,u_{i_1+i_2+i_3},u'_{r+1}),\\
        \bar{b}=&(\alpha_{1}p_{1},\ldots,\alpha_{i_1}p_{i_1},v_{i_1+1},\ldots,v_{i_1+i_2},\underbrace{0,\ldots,0}_{i_3-\textrm{times}},t).
        \end{align*}
        If $\bar{a}$ and $\bar{b}$ are adjacent, then $d(\bar{a},\bar{b})=1$. If not, note that $t$ is either a unit or non-unit. 
              
       \textbf{ Case 3 :} Finally suppose, both $\bar{a}$, $\bar{b}$ $\in N_{U}(\mathbb{Z}_{p_{1}^{n_{1}}} \oplus \cdots \oplus \mathbb{Z}_{p_{r}^{n_{r}}} \oplus \mathbb{Z}_{2^{m}})$. 
            If $\bar{a}$ and $\bar{b}$ are adjacent, then $d(\bar{a},\bar{b})=1$. If not, note that $t$ and $t'$ are either unit or non-unit. Accordingly, we consider the following cases:

            \textbf{Subcase(a) :} First suppose, both $t$ and $t'$ are non-unit. 

              By Lemma~\ref{unit+non-unit} and using the fact that a product of two units is also a unit, $\bar{a}+\bar{c}$, $\bar{c}+\bar{b}$ $\in U(\mathbb{Z}_{p_{1}^{n_{1}}} \oplus \cdots \oplus \mathbb{Z}_{p_{r}^{n_{r}}} \oplus \mathbb{Z}_{2^{m}})$. Therefore, $[\bar{a},\bar{c}]$, $[\bar{c},\bar{b}]$ $\in G(\mathbb{Z}_{p_{1}^{n_{1}}} \oplus \cdots \oplus \mathbb{Z}_{p_{r}^{n_{r}}} \oplus \mathbb{Z}_{2^{m}})$. Hence $d(\bar{a},\bar{b})=2$.

         \textbf{Subcase(b) :} Next suppose that $t$ and $t'$ are unit and nonunit, respectively. So we choose \\ $\bar{c}=(u_{1},\ldots,u_{i_1},v_{i_1+1},\ldots,v_{i_1+i_2},u_{i_1+i_2+1},\ldots,u_{i_1+i_2+i_3},u'_{r+1})$ and \\ $\bar{d}=(u_{1},\ldots,u_{i_1},v_{i_1+1},\ldots,v_{i_1+i_2},\underbrace{0,\ldots,0}_{i_3-\textrm{times}},0)$.
              By Lemma~\ref{unit+non-unit} and using the fact that, product of two units is also a unit, $\bar{a}+\bar{c}$, $\bar{c}+\bar{d}$, $\bar{d}+\bar{b}$ $\in U(\mathbb{Z}_{p_{1}^{n_{1}}} \oplus \cdots \oplus \mathbb{Z}_{p_{r}^{n_{r}}} \oplus \mathbb{Z}_{2^{m}})$. Therefore, $[\bar{a},\bar{c}]$, $[\bar{c},\bar{d}]$, $[\bar{d},\bar{b}]$ $\in G(\mathbb{Z}_{p_{1}^{n_{1}}} \oplus \cdots \oplus \mathbb{Z}_{p_{r}^{n_{r}}} \oplus \mathbb{Z}_{2^{m}})$. Hence $d(\bar{a},\bar{b})=3$.

         \textbf{Subcase(c) :} Finally suppose, both $t$, $t'$ $\in U(\mathbb{Z}_{2^m})$. In this case we choose \\ $\bar{c}=(u_{1},\ldots,u_{i_1},v_{i_1+1},\ldots,v_{i_1+i_2},\underbrace{0,\ldots,0}_{i_3-\textrm{times}},0)$.

         Therefore, $d(\bar{a},\bar{b})\leq 3$.

  From above discussion we get, $d(\bar{a},\bar{b})\leq 3$ for all $\bar{a}$, $\bar{b}$ $\in\mathbb{Z}_{p_{1}^{n_{1}}} \oplus \cdots \oplus \mathbb{Z}_{p_{r}^{n_{r}}} \oplus \mathbb{Z}_{2^{m}}$. Hence, $G(\mathbb{Z}_{p_{1}^{n_{1}}} \oplus \cdots \oplus \mathbb{Z}_{p_{r}^{n_{r}}} \oplus \mathbb{Z}_{2^{m}})$ is connected.

(2) From the proof of (1), we have $d(\bar{a},\bar{b})\leq 3$ for all $\bar{a}$, $\bar{b}$ $\in \mathbb{Z}_{p_{1}^{n_{1}}} \oplus \cdots \oplus \mathbb{Z}_{p_{r}^{n_{r}}} \oplus \mathbb{Z}_{2^{m}}$. Hence $\textrm{diam}(G(\mathbb{Z}_{p_{1}^{n_{1}}} \oplus \cdots \oplus \mathbb{Z}_{p_{r}^{n_{r}}} \oplus \mathbb{Z}_{2^{m}}))\leq 3$.

(3) The minimum degree of the unit graph $G(\mathbb{Z}_{p_{1}^{n_{1}}} \oplus \cdots \oplus \mathbb{Z}_{p_{r}^{n_{r}}} \oplus \mathbb{Z}_{2^{m}})$ is $2^{m-1} \phi(p_1^{n_1}) \phi( p_2^{n_2})\\\ldots \phi(p_r^{n_r})$. By Corollary 3.3 of \cite{Plesnik1989} we get, for a connected bipartite graph with a diameter less or equal to $3$, edge connectivity of the graph is equal to the minimum degree of the graph. Therefore, $\lambda(G(\mathbb{Z}_{p_{1}^{n_{1}}} \oplus \cdots \oplus
\mathbb{Z}_{p_{r}^{n_{r}}} \oplus \mathbb{Z}_{2^{m}}))=2^{m-1}
\phi(p_1^{n_1})\ldots \phi(p_r^{n_r})$.
\end{proof}

\begin{Theorem}
Let $G(\mathbb{Z}_{p_{1}^{n_{1}}} \oplus \cdots \oplus \mathbb{Z}_{p_{r}^{n_{r}}} \oplus \mathbb{Z}_{2^{m}})$ be a unit graph and let $H$ be its incidence matrix of size $|V| \times |E|$. If all $p_1$, $p_2$, $\ldots$, $p_r$ are distinct odd primes then for any odd prime $q$, $[p_{1}^{n_{1}}\ldots
p_{r}^{n_{r}}\phi(p_{1}^{n_{1}})\ldots\phi(p_{r}^{n_{r}})2^{2(m-1)},\;2^{m}p_{1}^{n_{1}}\ldots
 p_{r}^{n_{r}}-1, 2^{m-1}\phi(p_{1}^{n_{1}})\ldots\phi(p_{r}^{n_{r}})]_q$ is the $q-$ary code generated by $H$ over the finite field $\mathbb{F}_q$. Furthermore, the dual of the above linear code is a $[p_{1}^{n_{1}}\ldots
p_{r}^{n_{r}}\phi(p_{1}^{n_{1}})\ldots\phi(p_{r}^{n_{r}})2^{2(m-1)},\;2^m p_{1}^{n_{1}}\ldots
 p_{r}^{n_{r}}
 (\phi(p_{1}^{n_{1}})\ldots\phi(p_{r}^{n_{r}})2^{m-2}-1)+1,\;4]_q-$ linear code.
\end{Theorem}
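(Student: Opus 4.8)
The plan is to read off all six parameters from the combinatorial structure of $\Gamma := G(\mathbb{Z}_{p_{1}^{n_{1}}} \oplus \cdots \oplus \mathbb{Z}_{p_{r}^{n_{r}}} \oplus \mathbb{Z}_{2^{m}})$ over the ring $R$, writing $N = p_1^{n_1}\cdots p_r^{n_r}$ and $\Phi = \phi(p_1^{n_1})\cdots\phi(p_r^{n_r})$ for brevity, so that $|V|=2^m N$ and $|U(R)| = 2^{m-1}\Phi$. First I would record the length: applying Theorem~\ref{Number of edges} to the $r+1$ moduli $p_1^{n_1},\ldots,p_r^{n_r},2^m$ (exactly one of which is even) gives $|E| = N\Phi 2^{2(m-1)}$, the stated code length. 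Next I would observe that $\Gamma$ is regular of degree $\delta = 2^{m-1}\Phi$: for any vertex $a$ the number of $b$ with $a+b\in U(R)$ equals $|U(R)|$, and the excluded self-loop $b=a$ occurs only when $2a\in U(R)$, which never happens since the last coordinate $2a_{r+1}$ is even, hence a non-unit in $\mathbb{Z}_{2^m}$. Finally, by Theorem~\ref{bipartite} the graph $\Gamma$ is bipartite (only $2^m$ is even), and by Theorem~\ref{leq3} it is connected with $\lambda(\Gamma)=2^{m-1}\Phi$. For the dimension, since $q$ is odd I work over characteristic $\neq 2$: the left null space of $H$ is $\{y : y_u + y_w = 0 \text{ for every edge } uw\}$, and for a connected bipartite graph with parts $X,Y$ this forces $y$ constant on $X$ and its negative on $Y$, so the null space is one-dimensional and $\dim C_q(H)=\mathrm{rank}(H)=|V|-1 = 2^m N - 1$.

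The minimum distance of $C_q(H)$ I would identify with the edge connectivity. A codeword is $\sum_v c_v r_v$, whose entry on an edge $uw$ is $c_u+c_w$; twisting by the bipartition (replacing $c_v$ by $-c_v$ on $Y$) turns this into a vertex colouring $d:V\to\mathbb{F}_q$ whose codeword has support exactly the bichromatic edges $\{uw : d_u\neq d_w\}$, and the codeword is nonzero iff $d$ is non-constant. Merging colour classes never increases the number of bichromatic edges, so the minimum over non-constant colourings is attained by a two-colouring and equals the minimum size of an edge cut, namely $\lambda(\Gamma)$. Together with Theorem~\ref{leq3} this yields $d(C_q(H))=\lambda(\Gamma)=2^{m-1}\Phi$, which is also consistent with the single-row upper bound $d\le\delta=\lambda$.

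For the dual the length is unchanged and $\dim C_q(H)^\perp = |E|-\dim C_q(H) = N\Phi 2^{2(m-1)} - (2^m N - 1)$, which is exactly $2^m N(\Phi 2^{m-2}-1)+1$, the stated dimension. Its minimum distance equals the girth $g_r(\Gamma)$: a nonzero $x\in C_q(H)^\perp$ satisfies $\sum_{e\ni v}x_e=0$ at every vertex, so no vertex of its support has support-degree $1$; hence the support contains a cycle and $\mathrm{wt}(x)\ge g_r(\Gamma)$, while assigning alternating values $\pm 1$ around a shortest (necessarily even) cycle produces a codeword of weight $g_r(\Gamma)$. Since $\Gamma$ is bipartite, $g_r(\Gamma)\ge 4$; I would then exhibit a $4$-cycle $0-u_1-z-u_2-0$ given by two distinct units $u_1\neq u_2$ and a common neighbour $z\notin\{0,u_1,u_2\}$ with $u_1+z,u_2+z\in U(R)$, which exists because each odd prime power has at least three residues (so the constraints exclude at most two values per coordinate) while the even coordinate only requires $z_{r+1}$ to be even. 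This forces $g_r(\Gamma)=4$ and hence dual minimum distance $4$.

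The main obstacle will be the two minimum-distance claims rather than the length and dimension, which follow by substitution into Theorems~\ref{Number of edges}, \ref{bipartite}, and \ref{leq3}. Specifically, I must justify carefully that the primal minimum weight coincides with the edge connectivity via the colour-merging/min-cut argument, and that the dual minimum weight equals the girth, and then produce an explicit $4$-cycle valid for all admissible parameters to pin the girth at exactly $4$.
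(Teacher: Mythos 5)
Most of your proposal is sound, and it is in fact far more substantive than the paper's own treatment, which disposes of this theorem with the single sentence ``This is straightforward.'' Your computations of the length, the dimension (via the one-dimensional left null space of $H$ for a connected bipartite graph in odd characteristic), the identification of the minimum distance with the edge connectivity (via the bipartition twist and colour-class merging), and the identification of the dual minimum distance with the girth are all correct.

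The genuine gap is in the final step, the claim that the girth equals $4$. Your existence argument for the cycle $0-u_1-z-u_2-0$ asserts that the even coordinate ``only requires $z_{r+1}$ to be even,'' but when $m=1$ the only even residue in $\mathbb{Z}_2$ is $0$, so $z_{r+1}=0$ is forced; and in an odd coordinate with $p_i=3$ and $n_i=1$, the two exclusions $z_i\not\equiv -u_{1,i}$ and $z_i\not\equiv -u_{2,i}\pmod 3$ can leave $z_i=0$ as the only admissible value. In the extreme case $r=1$, $p_1^{n_1}=3$, $m=1$, i.e.\ $R=\mathbb{Z}_3\oplus\mathbb{Z}_2\cong\mathbb{Z}_6$, the only units are $(1,1)$ and $(2,1)$, so $u_1,u_2$ are forced, every coordinate of $z$ is forced to $0$, and your candidate $4$-cycle degenerates. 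This is not merely a defect of your construction: $G(\mathbb{Z}_6)$ is a $6$-cycle, so its girth is $6$ and the dual code is a $[6,1,6]_q$ code, contradicting the claimed dual distance $4$ --- the theorem as stated is false in this case, so no argument can close the gap without excluding it. For all other admissible parameters the girth is indeed $4$, but your proof still needs patching whenever $m=1$: one must exhibit a coordinate in which $z$ can be chosen nonzero, which is possible exactly when some $p_i\ge 5$, some $n_i\ge 2$, or $r\ge 2$ (for instance by picking $u_1\equiv u_2$ modulo $3$ in a $\mathbb{Z}_3$ coordinate so that only one residue is excluded there). In short: either assume $m\ge 2$, where your choice $z=(0,\ldots,0,2)$ works verbatim, or handle $m=1$ by the refined choice above and explicitly exclude $\mathbb{Z}_6$.
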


\begin{proof}
 This is straightforward. 

\end{proof}

We obtain the following corollary using the above theorem and Theorem 2.5.6 of \cite{LingXing2004}. The proof of which is omitted because it is straightforward.
\begin{Corollary}
The dual code of the linear code generated by the incidence matrix of order $|V| \times |E|$ of the unit graph $G(\mathbb{Z}_{p_{1}^{n_{1}}} \oplus \cdots \oplus \mathbb{Z}_{p_{r}^{n_{r}}} \oplus \mathbb{Z}_{2^{m}}) = (V,E)$, where $p_1$, $p_2$, $\ldots$, $p_r$ are all distinct odd prime numbers, is a $ 3 $-error-detecting code as well as a single-error-correcting code.
\end{Corollary}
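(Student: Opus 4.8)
The plan is to reduce the Corollary to a direct application of the standard dictionary between the minimum distance of a linear code and its error-detection and error-correction capabilities, feeding in the minimum distance of the dual code that was already computed in the preceding Theorem. First I would invoke that Theorem, which asserts that the dual of the code generated by $H$ is a $q$-ary linear code whose last parameter, the minimum distance, equals $4$. This is the only substantive ingredient, and by assumption it is already in hand.

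Next I would recall the general principle (Theorem 2.5.6 of \cite{LingXing2004}): a linear code with minimum distance $d$ can detect up to $d-1$ errors and can correct up to $\lfloor (d-1)/2 \rfloor$ errors. Substituting $d = 4$ yields detection of $4 - 1 = 3$ errors and correction of $\lfloor 3/2 \rfloor = 1$ error, which is precisely the assertion that the dual code is $3$-error-detecting and single-error-correcting.

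To make the argument self-contained in this concrete setting, I would also spell out the elementary justification of that principle. Since any two distinct codewords of a code with minimum distance $4$ differ in at least $4$ coordinates, no error pattern of Hamming weight at most $3$ can transform one codeword into another; hence every such error is detectable. Moreover, because $d \ge 3$, the Hamming balls of radius $1$ centered at distinct codewords are pairwise disjoint, so any received word lying within distance $1$ of some codeword decodes uniquely to that codeword, giving single-error correction.

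The main obstacle is not located in the Corollary itself but in the input it consumes, namely the determination that the dual minimum distance equals $4$. I would expect that value to follow from the bipartite structure of $G(\mathbb{Z}_{p_{1}^{n_{1}}} \oplus \cdots \oplus \mathbb{Z}_{p_{r}^{n_{r}}} \oplus \mathbb{Z}_{2^{m}})$, whose girth is $4$, together with the interpretation of minimal-weight dual codewords as minimal dependencies among the columns of the incidence matrix (equivalently, short cycles of the graph). Once that value is granted, as it is by the preceding Theorem, the rest is the routine implication that minimum distance $4$ forces three-error detection and single-error correction, as computed above.
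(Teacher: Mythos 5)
Your proposal is correct and follows exactly the route the paper intends: the paper omits the proof but states that the corollary follows from the preceding theorem (dual minimum distance $4$) combined with Theorem 2.5.6 of \cite{LingXing2004}, which is precisely the dictionary you invoke to get $d-1 = 3$ errors detected and $\lfloor (d-1)/2 \rfloor = 1$ error corrected. Your additional elementary justification of that dictionary is a harmless (and correct) supplement, not a departure.
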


\section*{\textbf{Conclusion}}
In this paper, we investigate the structural properties of the unit graph \( G(\mathbb{Z}_n) \), where  $n = p_1^{n_1} p_2^{n_2} \dots p_r^{n_r}$, with $ p_1, p_2, \dots, p_r $ being distinct prime numbers and $ n_1, n_2, \dots, n_r $ being positive integers. We establish its connectivity, determine its diameter, and analyze its edge connectivity. Furthermore, we explored the relationship between unit graphs and coding theory by constructing \( q \)-ary linear codes from their incidence matrices and explicitly determining their parameters and duals. A significant contribution of this work is the proof of two conjectures from \cite{Jain2023}, which relate to the connectivity and coding-theoretic properties of unit graphs. These results extend existing research on graph-based codes and highlight the interplay between algebraic graph theory and coding theory. The combinatorial features of unit graphs and their use in error-correcting codes, network security, and cryptography are better understood here.

\end{document}